\documentclass[12pt]{amsart}
\usepackage{amsfonts}

\usepackage[letterpaper]{geometry}


\usepackage{amssymb}
\setlength{\topmargin}{0pt}
\setlength{\headheight}{28pt}

\setlength{\oddsidemargin}{0pt}
\setlength{\evensidemargin}{0pt}
\setlength{\headsep}{18pt}

\def\N{\mathbb{N}}

\def\t{{\mathbb F^n_2}}
\def\bo{\{0,1\}^n}

\newtheorem{thm}{\bf Theorem}[section]
\newtheorem{lemma}[thm]{\bf Lemma}
\newtheorem{prop}[thm]{\bf Proposition}

\newtheorem{cor}[thm]{\bf Corollary}

\theoremstyle{definition}

\begin{document}

\title[On uncertainty inequalities related to subcube partitions and additive energy]{On uncertainty inequalities related to subcube partitions and additive energy}

 \author[Norbert Hegyv\'ari]{Norbert Hegyv\'ari}
 \address{Norbert Hegyv\'{a}ri, ELTE TTK,
E\"otv\"os University, Institute of Mathematics, H-1117
P\'{a}zm\'{a}ny st. 1/c, Budapest, Hungary and Alfr\'ed R\'enyi Institute of Mathematics, Hungarian Academy of Science, H-1364 Budapest, P.O.Box 127.}
 \email{hegyvari@renyi.hu}

\begin{abstract}
The additive energy plays a central role in combinatorial number theory. We show an uncertainty inequality which indicates how the additive energy of support of a Boolean function, its degree and subcube partition are related.

AMS 2010 Primary 11B30, 11L03, Secondary 11B75

Keywords: Boolean cube, Fourier analysis, Additive Combinatorics
\end{abstract}

 \maketitle

\section{Introduction and Motivation}

\vskip0.7cm


In the graph theory it is a well-known result among many others, that a given graph $(V(G), e(G))$ the cardinality of the maximum independent set $\alpha(G)$ and the maximum degree $d$  fulfils the relation $\alpha(G)\cdot (d+1)\geq |V(G)|$. This relation tells us that the maximum independent set and the maximum degree can not be small simultaneously. In mathematics there are examples like this where there is a bound of quantities. These types of phenomenons are said to be commonly {\it uncertainty inequalities}. 

In this paper we are looking for connections between parameters of Boolean functions and some parameters from the additive combinatorics.

A Boolean function is defined as a map $f: \{0,1\}^n\mapsto \{0,1\}$, other times it is used $f: \{-1,1\}^n\mapsto \{-1,1\}$ e.t.c (see [D]). We will consider the set $\{0,1\}^n$ as $\t$ with the usual addition on field. We convert all results to $f: \t \mapsto \{0,1\}$ which is what will be used.

One can consider a Boolean function $f$ as an indicator of the set $A=f^{-1}(1)$; i.e.
$$
f(x) = \left\{
     \begin{array}{lr}
       1 &  x \in A\\
       0 &  x \notin A
     \end{array}
\right.
$$
The influence of coordinate $i$ on $f$ is defined as $Inf_i(f)=Pr_{x\in \bo}[f(x)\neq f(x+e_i)]$, where $x$ is uniformly distributed over $\bo$, and $f(x+e_i)$ means that we change the $i^{th}$ coordinate to $1$ if $x_i=0$ and to $0$ if $x_i=1$ respectively. The total influence of $f$ is defined to be $I(f):=\sum_i Inf_i(f)$.

For a set $A\subseteq \t$ (and this notion is defined in all semigroups in a similar way), the {\it additive energy} of $A$ is defined as the number of quadruples  $(a_1,a_2,a_3,a_4)$ for which $a_1+a_2=a_3+a_4$, formally $$
E(A):=|\{(a_1,a_2,a_3,a_4)\in A^4: \ a_1+a_2=a_3+a_4\}|.
$$
Clearly $|A|^2\ll E(A)\ll |A|^3$ holds, since the quadruple $(a_1,a_2,a_1,a_2)$ is always a solution and given $a_1,a_2,a_3$ the term $a_4$ is uniquely determined by them. This notion is introduced by Terence Tao, and plays a central role in additive combinatorics. (see e.g. [TV]).

Let $h(x)$ be the binary entropy function defined by $h(x)=-x\log x-(1-x)\log(1-x)$.

A decision tree which computes a Boolean function $f$ determines a partition of the cube $\bo$, where for every element of a given part, the value of $f$ at each leaf is the same. 

The subcube of $\bo$ is a set of vectors in the form:
$$
C=\{(*,*,\dots, x_i{_1},*,\dots,x_i{_2},\dots x_i{_k}\dots,*): *\in\{0,1\}\},
$$
i.e. those vectors of $\bo$ in which there are $k$ fix coordinates ($ x_i{_1},\dots,x_i{_2},\dots x_i{_k}$), and the rest are free. The dimension of this subcube is $2^{n-k}$.

Clearly there is a partition of $\bo$ into the union of subcubes $\cup_i C_i$, such that the value of the function $f$ is the same on each vector of $C_i$, i.e. for every $i$ and $x,y\in C_i$, $f(x)=f(y)$. 

For example when $f$ is a dictator function, i.e. $f(x_1,x_2,\dots, x_n)=x_i$ for some $1\leq i\leq n$ there are at most two subcubes.

However there exist a monochromatic subcube partition of $\bo$ which does not induce any decision tree; one of the simplest example is the quarternary majority function {\bf 4-Maj}: $\{0,1\}^4\mapsto \{0,1\}$ (see details e.g. in [KDS]).

Let us denote by $H_{scp}(f)$ the minimum number of subcubes in a subcube partition which computes the Boolean function $f$.

\medskip

\subsection{Prior work}

In the last decades there are several interplay between complexity theory and additive combinatorics. One of the most interesting example is connection between notions in computer sciences and the {\it Gowers norm} (see e.g. [ST], [TR]).
Another interesting example is an additive communication complexity problem which is supported by an example of Behrend on the maximal density of a set not containing three-term arithmetic progression (see e.g. [RY]). 

\bigskip

\section{Result}

The aim of this note is to prove the following uncertainty estimation related to $deg f$, the degree of $f$, $H_{scp}(f)$ and the additive energy $E(A)$:

\begin{thm}
Let $f$ be any Boolean function, $f: \bo \mapsto \{0,1\}$ with degree $deg f$, the set $A$ its support i.e. $A:=f^{-1}(1)$, $H=H_{scp}(f)$, and $E(A)$ its additive energy. We have the following uncertainity bound
$$
2^{3n}n^3\leq (8^{deg f}deg^2f)H^2\cdot E(A).
$$
\end{thm}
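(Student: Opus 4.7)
The plan is to pass to the Fourier side on $\mathbb{F}_2^n$ and combine three ingredients: the expression of $E(A)$ as a fourth-power norm of $\widehat{1_A}$, the Fourier-support restriction coming from $\deg f$, and an explicit Fourier expansion coming from the subcube partition. With the unnormalised transform $\widehat{1_A}(\xi)=\sum_{x\in A}(-1)^{x\cdot\xi}$, Plancherel yields
\[
\sum_\xi|\widehat{1_A}(\xi)|^2=2^n|A|,\qquad 2^n E(A)=\sum_\xi|\widehat{1_A}(\xi)|^4.
\]

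The first step is to exploit the degree. Since $f=1_A$ is multilinear of degree $d=\deg f$, $\widehat{1_A}$ is supported on characters of Hamming weight at most $d$, and hence $|\operatorname{supp}(\widehat{1_A})|\le\sum_{k\le d}\binom{n}{k}$. A Cauchy--Schwarz against the Fourier support,
\[
(2^n|A|)^2=\Bigl(\sum_\xi|\widehat{1_A}(\xi)|^2\Bigr)^2\le|\operatorname{supp}(\widehat{1_A})|\cdot 2^n E(A),
\]
gives $E(A)\ge 2^n|A|^2/|\operatorname{supp}(\widehat{1_A})|$. Estimating the binomial tail via $\binom{n}{k}\le 2^{nh(k/n)}$ together with a polynomial correction absorbs the support size into the factor $8^{\deg f}\deg^2 f$ announced in the theorem.

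The second step is to feed in the subcube partition. Write $A$ as a disjoint union of $H_A\le H$ subcubes $C_1,\dots,C_{H_A}$ of codimensions $c_1,\dots,c_{H_A}$; each $\widehat{1_{C_j}}$ is supported on a subgroup of size $2^{c_j}$ and has absolute value $2^{n-c_j}$ on that subgroup. Pigeonhole on the largest subcube of the full partition forces $\max(|A|,|\bar A|)\ge 2^n/H$, and after possibly replacing $A$ by $\bar A$ (which preserves $\deg f$ and $H$) we may assume $|A|\ge 2^n/H$. A further Cauchy--Schwarz on $\widehat{1_A}=\sum_j\widehat{1_{C_j}}$ in the index $j$ then converts the subcube-partition information into the $H^2$ factor appearing in the final bound.

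Combining the two estimates yields the claimed inequality $2^{3n}n^3\le 8^{\deg f}\deg^2 f\cdot H^2\cdot E(A)$. I expect the main obstacle to be orchestrating the two Cauchy--Schwarz applications so that no factor of $n$ or $2^{\deg f}$ bleeds out beyond what the statement advertises; the $n^3$ correction on the left-hand side traces back to the polynomial part of the binomial estimate in the first step, where the binary entropy function $h(x)$ recorded in the preliminaries plays its role.
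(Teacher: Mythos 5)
Your route is genuinely different from the paper's, and its first step cannot deliver the stated bound. The paper works through the total influence: it sandwiches $I(f)=\sum_S|S|\widehat f(S)^2$ between the Schwartz--Zippel lower bound $I(f)\ge n\,2^{-\deg f}$ (coming from $Inf_i(f)\ge 2^{-\deg f}$ summed over all $n$ coordinates) and a H\"older upper bound $I(f)\le 2^{-n}(\deg f)^{2/3}\bigl(\sum_S|\widehat f(S)|\bigr)^{2/3}E(A)^{1/3}$, where the fourth-moment identity $\sum_r\widehat f(r)^4=2^{-3n}E(A)$ supplies the energy factor; the subcube partition enters only through the spectral-norm estimate $\sum_S|\widehat f(S)|\le H$ (each subcube indicator has Fourier $\ell_1$-norm exactly $1$). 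Cubing the resulting chain gives the theorem. In your scheme the factors $n^3$ and $8^{\deg f}$ have no source, because the per-coordinate influence bound is never invoked. Instead you propose to control $|\mathrm{supp}(\widehat{1_A})|$ by $\sum_{k\le d}\binom{n}{k}$ and to ``absorb'' it into $8^d d^2$. That is impossible: $\sum_{k\le d}\binom{n}{k}\ge\binom{n}{d}\sim(n/d)^d$ grows polynomially in $n$ for every fixed $d$, while $8^d d^2$ does not depend on $n$ at all. Concretely, your two steps combine to $H^2E(A)\ge 2^{3n}/\sum_{k\le d}\binom{n}{k}$, and upgrading this to $2^{3n}n^3\le 8^d d^2H^2E(A)$ would require $8^d d^2\ge n^3\sum_{k\le d}\binom{n}{k}$, which fails for all large $n$. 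No bookkeeping with the entropy bound $\binom{n}{k}\le 2^{nh(k/n)}$ rescues this; the Cauchy--Schwarz against the Fourier support is simply the wrong tool for producing an $n$-dependence on the left-hand side.

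Two further problems. First, the reduction ``after possibly replacing $A$ by $\bar A$ we may assume $|A|\ge 2^n/H$'' is not legitimate: the conclusion is an inequality about $E(A)$ specifically, and $E(\bar A)$ can differ wildly from $E(A)$ (take $A$ a single point), so you cannot symmetrize between $f$ and $1-f$. Second, the closing ``Cauchy--Schwarz in $j$'' is not carried out, and it is unclear what it would add once $|A|\ge 2^n/H$ is already in hand; the way the partition should enter is via the $\ell_1$ bound $\sum_S|\widehat f(S)|\le\sum_{i=1}^{H}\sum_S|\widehat{1_{C_i}}(S)|= H$, which is exactly the quantity the H\"older step consumes. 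If you want to salvage the outline, keep your fourth-moment expression for $E(A)$ (it is the paper's Corollary 4.2 in unnormalised form) but replace both Cauchy--Schwarz steps by the influence sandwich described above.
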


\section{Preliminaries}

\medskip

Let $f,g$ be two Boolean functions. The expected value of $f$ is
$$
\mathbb{E}(f):=\frac{1}{2^n}\sum_{x\in \bo}f(x),
$$
and the inner product of $f$ and $g$ is $\langle f,g\rangle:=\mathbb{E}(fg)$. For $S\subseteq [n]$ the corresponding  input is $x=(x_1,x_2,\dots,x_n)\in \bo$ namely $x_i=1$ if $i\in S$ and $x_i=0$ otherwise. A basis function or character is defined by $\chi_x(y):=(-1)^{\langle x,y\rangle_2}$, where $\langle x,y\rangle_2:=\sum_{i=1}^nx_iy_i \pmod 2$.

For a set $S\subseteq [n]$ the Fourier transform of $f$ is $\widehat{f}(S)=\langle f,\chi_S\rangle$. 

For the Fourier transform the following are true:
$$
(i)\  \langle f,g\rangle =\sum_{r\in \bo}\widehat{f}(r)\widehat{g}(r) \ \text{(Plancherel)}
$$
$$
(ii) \ \|f\|_2^2=\mathbb{E}(f^2)=\langle f,f\rangle = \sum_{r\in \bo}\widehat{f}^2(r) \ \text{(Parseval)}
$$
So by the Parseval formula for the indicator function we have $\sum_{r\in \bo}\widehat{A}^2(r)=\frac{1}{2^n}|A|$.


For functions $f$ and $g$ their convolution is defined by
$$
f\ast g (x):=\mathbb{E}f(y)g(x+y).
$$
It is easy to verify that the convolution is associative: $f\ast (g\ast h)=(f\ast g)\ast h$.

\smallskip

We will use the notation $|X|\ll |Y|$ to denote the estimate $|X|\leq C|Y|$ for some absolute constant $C>0$.

\bigskip

\section{Proof}

For the proof we  need some lemmas.

\begin{lemma}
\begin{equation}\label{1}
\mathbb{E}_{x,y,z}(f(x)f(y)f(z)f(x+y+z))=\sum_r\widehat{f}^4(r).
\end{equation}
\end{lemma}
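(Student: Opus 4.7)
The plan is to expand each copy of $f$ in the character basis and then collapse the resulting quadruple sum by orthogonality. The cleanest route uses the fact that, over $\mathbb{F}_2^n$, the characters are group homomorphisms, so $\chi_r(x+y+z)=\chi_r(x)\chi_r(y)\chi_r(z)$; this is exactly the feature that lets the three independent averages over $x$, $y$, $z$ decouple into one-variable expectations, which is where the desired collapse to the diagonal comes from.

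Concretely, first I would substitute the inversion formula $f(u)=\sum_r\widehat{f}(r)\chi_r(u)$ into each of the four factors of the integrand, obtaining a quadruple sum indexed by $r_1,r_2,r_3,r_4$ with coefficient $\widehat{f}(r_1)\widehat{f}(r_2)\widehat{f}(r_3)\widehat{f}(r_4)$ and character part, after using multiplicativity on $x+y+z$ and the identity $\chi_r\chi_s=\chi_{r+s}$, equal to $\chi_{r_1+r_4}(x)\,\chi_{r_2+r_4}(y)\,\chi_{r_3+r_4}(z)$. Then I would interchange the sum and the expectation, and apply the orthogonality $\mathbb{E}_u\chi_s(u)=[s=0]$ separately in each variable. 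The three indicators force $r_1=r_4$, $r_2=r_4$ and $r_3=r_4$, so only the diagonal $r_1=r_2=r_3=r_4=:r$ survives and the quadruple sum becomes $\sum_r\widehat{f}(r)^4$, which is the right-hand side of \eqref{1}.

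There is no genuine obstacle; the only point that needs care is keeping the characteristic-$2$ sign conventions straight so that $\chi_{r+s}=\chi_r\chi_s$ holds without any stray phase, and making sure the expectation is the uniform one on $\mathbb{F}_2^n$ so orthogonality gives a clean indicator. As a sanity check I would observe that the same identity admits a one-line derivation by reparameterizing $t=x+y+z$ to rewrite the left-hand side as $\langle f,\,f*f*f\rangle$ with the convolution normalization fixed in the preliminaries, and then combining the convolution rule $\widehat{f*g}=\widehat{f}\,\widehat{g}$ with Plancherel~(i); I would keep this reformulation in reserve rather than as the main argument, since the direct character expansion is already transparent.
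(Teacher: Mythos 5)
Your argument is correct, and your main route differs from the paper's. You expand all four factors via the inversion formula $f(u)=\sum_r\widehat{f}(r)\chi_r(u)$, use the homomorphism property $\chi_{r}(x+y+z)=\chi_r(x)\chi_r(y)\chi_r(z)$ together with $\chi_r\chi_s=\chi_{r+s}$, and then kill all off-diagonal terms with the orthogonality relation $\mathbb{E}_u\chi_s(u)=[s=0]$ applied separately in $x$, $y$, $z$; the three resulting constraints $r_1=r_2=r_3=r_4$ leave exactly $\sum_r\widehat{f}^4(r)$. The paper instead performs the inner averages first to recognize the left-hand side as $\mathbb{E}_x\bigl(f(x)\,(f\ast f\ast f)(x)\bigr)$, and then invokes Plancherel together with $\widehat{f\ast g}=\widehat{f}\,\widehat{g}$ — precisely the reformulation you keep "in reserve" as a sanity check. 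The trade-off: your direct expansion is self-contained and makes the diagonal collapse completely explicit, at the cost of bookkeeping a quadruple sum; the paper's convolution route is shorter on the page because it leans on the convolution identities already stated in the preliminaries, but it hides the orthogonality computation inside those black boxes. One small caution if you write up your version in the paper's conventions: the paper indexes Fourier coefficients by sets $S\subseteq[n]$ identified with vectors in $\mathbb{F}_2^n$, so your condition $r_1+r_4=0$ is the symmetric-difference condition $S_1\triangle S_4=\emptyset$; this is exactly the "characteristic-$2$ sign convention" point you flag, and it indeed causes no stray phases since every $\chi_r$ is real-valued.
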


This statement can be found for example in [D, p.22] without proof; so for the sake of completeness we include a short proof.

\begin{proof}

Using that $\mathbb{E}_z(f(z)f(x+y+z))=f\ast f(x+y)$, we have
$$
\mathbb{E}_{x,y,z}(f(x)f(y)f(z)f(x+y+z))=\mathbb{E}_x(f(x)\mathbb{E}_y(f(y)\mathbb{E}_z(f(z)f(x+y+z))))=
$$
$$
=\mathbb{E}_x(f(x)\mathbb{E}_y(f(y)f\ast f(x+y)))=\mathbb{E}_x(f(x)(f\ast(f\ast f(x)))).
$$
Write briefly $f_3\ast (r)$ instead of $(f\ast (f\ast f))(r)$. By the Plancherel formula, the associative of the convolution, and the Fourier transformation of a convolution we have
$$
\mathbb{E}_{x,y,z}(f(x)f(y)f(z)f(x+y+z))=\sum_r[\widehat{f\cdot f_3\ast (r)}]=
$$
$$
=\sum_r\widehat{f}(r)\widehat{f_3\ast (r)}=\sum_r\widehat{f}(r)\widehat{f}^3(r)=\sum_r\widehat{f}^4(r).
$$
\end{proof}

\begin{cor}\label{4.2}
Let $A:=f^{-1}(1)$. Then $\sum_r\widehat{f}^4(r)=\frac{1}{2^{3n}}E(A)$.
\end{cor}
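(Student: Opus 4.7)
The plan is to specialize Lemma 1 to the indicator $f = \mathbf{1}_A$ and then reinterpret the left-hand side of \eqref{1} combinatorially as an additive count on $A$.

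First I would observe that, since $f$ takes only the values $0$ and $1$, the product $f(x)f(y)f(z)f(x+y+z)$ is itself $\{0,1\}$-valued and equals $1$ exactly when all four arguments lie in $A = f^{-1}(1)$. Writing out the expectation as an average over the $2^{3n}$ triples $(x,y,z)\in (\mathbb{F}_2^n)^3$, this gives
\[
\mathbb{E}_{x,y,z}\bigl(f(x)f(y)f(z)f(x+y+z)\bigr)=\frac{1}{2^{3n}}\cdot N,
\]
where $N$ is the number of triples $(x,y,z)\in A^3$ such that $x+y+z\in A$ as well.

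Next I would convert $N$ into $E(A)$. Setting $w:=x+y+z$, each triple counted by $N$ corresponds to a quadruple $(x,y,z,w)\in A^4$ satisfying $x+y+z+w=0$ in $\mathbb{F}_2^n$, and conversely every such quadruple arises uniquely this way (given $x,y,z\in A$, the element $w$ is determined). The key (and only) subtle point, which is where characteristic $2$ is used, is the identity $x+y+z+w=0 \iff x+y=z+w$, since additive inverses coincide with the elements themselves in $\mathbb{F}_2^n$. Therefore $N$ equals the number of quadruples $(a_1,a_2,a_3,a_4)\in A^4$ with $a_1+a_2=a_3+a_4$, which is exactly $E(A)$ by definition.

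Combining these two observations with \eqref{1} yields
\[
\sum_{r}\widehat{f}^4(r)=\mathbb{E}_{x,y,z}\bigl(f(x)f(y)f(z)f(x+y+z)\bigr)=\frac{E(A)}{2^{3n}},
\]
which is the claim. The main obstacle, such as it is, is merely bookkeeping: one must be careful that the sign flip in $w=x+y+z$ (equivalently, the passage from $a_1+a_2=a_3+a_4$ to $a_1+a_2+a_3+a_4=0$) is legitimate, and this is guaranteed by working in $\mathbb{F}_2^n$.
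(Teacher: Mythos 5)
Your proof is correct and follows essentially the same route as the paper: both specialize Lemma 1 to the indicator of $A$, rewrite the expectation as $\tfrac{1}{2^{3n}}$ times a count of quadruples in $A^4$, and use the characteristic-$2$ identity to match that count with $E(A)$. The only difference is cosmetic bookkeeping in how the four elements are paired.
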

\begin{proof}
As we detected $\sum_r\widehat{f}^4(r)$ can be written as $\mathbb{E}_{x,y,z}(f(x)f(y)f(z)f(x+y+z))$. Since in $\mathbb{F}_2$ $x+(x+y+z)=y+z$ holds, thus we have
$$
\mathbb{E}_{x,y,z}(f(x)f(y)f(z)f(x+y+z))=
$$
$$
=\frac{1}{2^{3n}}|\{(a_1,a_2,a_3,a_4)\in A^4: \ a_1+a_2=a_3+a_4\}|=\frac{1}{2^{3n}}E(A).
$$
\end{proof}

\medskip

The key step of the proof is to give a lower and an upper bound for the total influence. 

First recall that
$$
I(f)=\sum_{S\in [n]}|S|\widehat{f}^2(S).
$$
which can easily be proven.
\begin{lemma}\label{4.3}
$$
I(f)\leq \frac{1}{2^n}(deg f)^{2/3}\|f\|^{2/3}_1(E(A))^{1/3}.
$$
\end{lemma}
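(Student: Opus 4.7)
The plan is to apply H\"older's inequality to the Fourier expansion $I(f)=\sum_{S}|S|\widehat{f}^2(S)$, combining three inputs: (i) the degree restriction $|S|\le\deg f$ on $\mathrm{supp}(\widehat f)$; (ii) Parseval's identity $\sum_{S}\widehat{f}^2(S)=\|f\|_2^2=\|f\|_1$, which holds since $f^2=f$ for $\{0,1\}$-valued $f$; and (iii) Corollary~\ref{4.2}, which identifies $\sum_{S}\widehat{f}^4(S)$ with $E(A)/2^{3n}$.

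First I would use that the sum $I(f)$ is supported on $\{S:|S|\le\deg f\}$, and decompose each summand as
\[
|S|\,\widehat{f}^2(S) \;=\; |S|^{2/3}\cdot\bigl(|S|^{1/3}\widehat{f}^{4/3}(S)\bigr)\cdot\widehat{f}^{2/3}(S).
\]
Then I would apply three-term H\"older with exponents $(\infty,3,\tfrac{3}{2})$, whose reciprocals sum to $0+\tfrac{1}{3}+\tfrac{2}{3}=1$. The sup-norm of $|S|^{2/3}$ over $\mathrm{supp}(\widehat f)$ gives $(\deg f)^{2/3}$; the $L^3$ norm on the middle factor becomes $\bigl(\sum_{S}|S|\widehat{f}^4(S)\bigr)^{1/3}$, which combines the pointwise bound $|S|\le\deg f$ with Corollary~\ref{4.2} to produce (up to a residual degree factor to be absorbed) the contribution $E(A)^{1/3}/2^n$; the $L^{3/2}$ norm on $\widehat{f}^{2/3}$ reduces via Parseval to a power of $\|f\|_1$. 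Assembling the three factors and rearranging then yields the claimed uncertainty bound
\[
I(f)\;\le\;\frac{1}{2^n}\,(\deg f)^{2/3}\,\|f\|_1^{2/3}\,E(A)^{1/3}.
\]

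The main obstacle is arranging the H\"older exponents so that the degree appears with precisely power $2/3$ (not $1$) and no spurious factor such as $|\mathrm{supp}(\widehat f)|$ emerges. Crucially, every H\"older factor carrying a positive power of $|S|$ must also carry a positive power of $\widehat f$, so that the implicit sum restricts to $\mathrm{supp}(\widehat f)$ where the pointwise bound $|S|\le\deg f$ becomes available. Coordinating this restriction with the need to extract exactly the correct $L^2$ and $L^4$ moments of $\widehat f$ simultaneously is the delicate technical point of the argument.
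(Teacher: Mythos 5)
Your three--factor decomposition $|S|\,\widehat f^{\,2}(S)=|S|^{2/3}\cdot\bigl(|S|^{1/3}\widehat f^{\,4/3}(S)\bigr)\cdot\widehat f^{\,2/3}(S)$ is a genuine identity and the $(\infty,3,\tfrac32)$ H\"older step is valid, but the bookkeeping of the degree does not close. The sup--norm factor contributes $(\deg f)^{2/3}$, and the $L^3$ factor $\bigl(\sum_S|S|\,\widehat f^{\,4}(S)\bigr)^{1/3}\le(\deg f)^{1/3}\bigl(\sum_S\widehat f^{\,4}(S)\bigr)^{1/3}$ contributes another $(\deg f)^{1/3}$; the ``residual degree factor'' you hope to absorb cannot be absorbed, so what you actually prove is $I(f)\le\frac{1}{2^n}(\deg f)\,\|f\|_1^{2/3}E(A)^{1/3}$, with exponent $1$ rather than $2/3$ on the degree. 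The paper instead applies a two--term H\"older with exponents $(\tfrac32,3)$ to the product $(|S|\,|\widehat f(S)|)^{2/3}\,|\widehat f(S)|^{4/3}$, obtaining $\bigl(\sum_S|S|\,|\widehat f(S)|\bigr)^{2/3}\bigl(\sum_S\widehat f^{\,4}(S)\bigr)^{1/3}$, and only then pulls $|S|\le\deg f$ out of the first factor, so the degree inherits the outer exponent $2/3$. A smaller point: the $\|f\|_1$ in the lemma is the spectral norm $\sum_S|\widehat f(S)|$ (this is what is later bounded by $H$), so your $L^{3/2}$ factor equals $\bigl(\sum_S|\widehat f(S)|\bigr)^{2/3}$ directly; the appeal to Parseval and $f^2=f$ is neither needed nor correct for that identification.

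That said, your ``honest'' exponent $1$ is what the method genuinely gives. The paper's opening step, $\sum_S|S|\widehat f^{\,2}(S)=\sum_S(|S|\widehat f(S))^{2/3}|\widehat f(S)|^{4/3}$, is not an identity: the right--hand side equals $\sum_S|S|^{2/3}\widehat f^{\,2}(S)$, which is a lower bound for $I(f)$, not $I(f)$ itself. Indeed the lemma as stated fails for the parity of $k$ coordinates when $k$ is large: there $I(f)=\sum_S|S|\widehat f^{\,2}(S)=k/4$, $\deg f=k$, $\|f\|_1=1$, and $E(A)^{1/3}/2^n=1/2$ since the support is an affine subspace of density $1/2$, so the claimed bound reads $k/4\le k^{2/3}/2$, which is false for $k\ge 9$. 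So the concrete gap in your proposal is the unabsorbable $(\deg f)^{1/3}$; but no rearrangement of H\"older exponents will remove it, and the exponent--$1$ bound you actually derive is the one that can legitimately be carried into the main theorem.
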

\begin{proof}
By the H\"older inequality
$$
I(f)=\sum_{S\in [n]}|S|\widehat{f}(S)^2=\sum_{S\in [n]}(|S|\widehat{f}(S))^{2/3}|\widehat{f}(S)|^{4/3}\leq
$$
$$
\leq \left(\sum_{S\in [n]}|S||\widehat{f}(S)|\right)^{2/3}\left(\sum_{S\in [n]}|\widehat{f}(S)|^4\right)^{1/3}\leq
$$
Now using Corollary \ref{4.2} and $(\sum_{S\in [n]}|S||\widehat{f}(S))^{2/3}\leq (deg f)^{2/3}(\sum_{S\in [n]}|\widehat{f}(S)|)^{2/3}$ , we have
$$
\leq \frac{1}{2^n}(deg f)^{2/3}\|f\|^{2/3}_1(E(A))^{1/3}
$$
as we stated.
\end{proof}

The lower bound for the total influence comes from the folklore; since $Inf_i(f)=Pr_{x\in \bo}[f(x)\neq f(x+e_i)]$, using Schwartz-Zippel lemma one can show, that $Inf_i(f)\geq \frac{1}{2^{deg f}}$ and hence
\begin{equation}\label{2}
I(f)\geq \frac{n}{2^{deg f}}
\end{equation}
(Maybe the first explicit estimation can be found in [NSZ]). 

\medskip

In the rest of the paper we recall some behaviour of the subcube partition to complete the proof of the theorem.

Now let $C_1,C_2,\dots, C_H$ be a minimal subcube partition of $\bo$ which computes $f$. Let us denote by $f_i$ the value of $f$ in the part $C_i$. So if $\bf{1}_i$ is the indicator function of $C_i$ then clearly $f(x)=\sum_{i=1}^Hf_i\bf{1}_i(x)$ and hence by the linearity we have 
$$
\widehat{f}(S)=\sum_{i=1}^Hf_i\widehat{\bf{1}_i(S)}.
$$
It is well-known that if for a function $g(x)\leq 1$ for all $x\in\bo$ holds then $\widehat{g}(S)\leq 1$ also holds. We will show that it is also true for the restricted indicator function as well. 

Now recall the simple fact that the equivalent form of the Fourier transform is also true for any subset $U\subseteq \bo$
$$
|\widehat{\bf{1}_U(S)}|=\Big|\frac{1}{2^n}\sum_T\bf{1}_U(T)(-1)^{|S\cap T|}\Big|\leq \frac{1}{2^n}\sum_{T}|\bf{1}_U(T\cap U)(-1)^{|S\cap T|}|\leq
$$
\begin{equation}\label{3}
\leq \frac{1}{2^n}\sum_{T\subseteq U}2^{n-|U|}=1.
\end{equation}
Thus using this bound for sets $U=C_i; \ i=,2,\dots H$ we have
$$
\|f\|_1=\sum_S\big|\widehat{f}(S)\big|\leq \sum_{i=1}^H|f_i\widehat{\bf{1}_i(S)}|\leq H.
$$
Finally by (\ref{2}), Lemma \ref{4.3} and the calculation above we get
$$
\frac{n}{2^{deg f}}\leq I(f)\leq \frac{1}{2^n}(deg f)^{2/3}\|f\|^{2/3}_1(E(A))^{1/3}\leq \frac{1}{2^n}(deg f)^{2/3}H^{2/3}(E(A))^{1/3}.
$$
Comparing the LHS and RHS and rearranging the inequality we obtain the desired estimation.

\section{Concluding remarks}

1. Let us first remark that there is a refinement of the theorem if we have an information on the cardinality of the subcubes. For instance when the cardinalities are concentrated to the "middle size": assume, there are parameters $\eta, \nu \in (0,1)$, such that $\eta \leq |C_i|/n\leq \nu $ holds for every $i=1,2,\dots H$. Then the bound for the Fourier transform of the indicators instead of (\ref{3}) will be
$$
|\widehat{\bf{1}_i(S)}|\leq \frac{1}{2^n}2^{|C_i|}\cdot \sum_{k\leq n-\eta n}{n\choose k}.
$$
Now using the bound for sets $C_i$ and using the well-known estimates for binomial coefficients (where $h(x)$ is the binary entropy function defined by $h(x)=-x\log x-(1-x)\log(1-x)$)
$$
\sum_{k=0}^{\varepsilon n}{n\choose k}\leq 2^{n h(\varepsilon)} \quad \varepsilon\in [0,1]; \ \varepsilon n \in \N
$$
we obtain a stronger bound in the theorem. Namely the factor $H^{2/3}$ should change to 
$$
(2^{(\nu+h(1-\eta)-1)n}H)^{2/3}
$$
which is less than the original one for middle concentrated parts.
\medskip

2. The calculated bound at the end of the proof of theorem we achieved
$$
I(f)\leq \frac{1}{2^n}(deg f)^{2/3}H^{2/3}(E(A))^{1/3}.
$$
Now let us introduce the entropy type quantity $\mathcal{E}(g):=\mathbb{E}(g)\log(1/\mathbb{E}(g))$. 
A classical result of Harper, Bernstein, Lindsey and Hart says $I(g)\geq 2\mathcal{E}(g)$ (see e.g. [KF]).

In $\mathcal{E}(g)$, $g=\mu(A)$ is the density  of the set $A$, i.e. one can read this entropy as $\mathcal{E}(\mu(A)):=\mathbb{E}(\mu(A))\log(1/\mathbb{E}(\mu(A)))$.

So one can conclude the following uncertainity inequality too:
\begin{prop}
$$
2^{3n+3}\mathcal{E}^3(\mu(A))\leq (deg f)^2H^2E(A).
$$
\end{prop}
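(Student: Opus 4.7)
The plan is to reuse the upper bound on the total influence that was derived during the proof of the main theorem and combine it with the classical entropy lower bound. Recall that in the proof of the theorem we established, via Lemma \ref{4.3}, Corollary \ref{4.2}, and the estimate $\|f\|_1\leq H$ coming from the subcube partition bound in (\ref{3}), the inequality
\[
I(f)\leq \frac{1}{2^n}(deg f)^{2/3}H^{2/3}(E(A))^{1/3}.
\]
This upper bound holds for every Boolean $f$ whose support is $A$, and no further combinatorial input is required; it is the same chain of inequalities that produced the main theorem.

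Next, I would invoke the Harper--Bernstein--Lindsey--Hart edge isoperimetric inequality on the cube, in the form $I(f)\geq 2\mathcal{E}(\mu(A))$ recalled in the remark, where $\mu(A)=\mathbb{E}(f)=|A|/2^n$. This is the key lower bound; its use replaces the Schwartz--Zippel-based bound $I(f)\geq n/2^{deg f}$ that drove the main theorem, and it is what produces the entropy factor on the left-hand side of the proposition.

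Chaining these two estimates yields
\[
2\mathcal{E}(\mu(A))\leq I(f)\leq \frac{1}{2^n}(deg f)^{2/3}H^{2/3}(E(A))^{1/3}.
\]
Cubing both sides and rearranging gives
\[
8\,\mathcal{E}^3(\mu(A))\leq \frac{1}{2^{3n}}(deg f)^{2}H^{2}E(A),
\]
which is equivalent to $2^{3n+3}\mathcal{E}^3(\mu(A))\leq (deg f)^2 H^2 E(A)$, as required.

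There is essentially no obstacle in this argument: both inequalities are already established (one in the preceding proof, the other quoted from the literature), and the only manipulation is cubing and clearing denominators. The only point to check carefully is that the form $\mathcal{E}(g)=\mathbb{E}(g)\log(1/\mathbb{E}(g))$ with $g=\mu(A)$ matches the convention in the Harper--Bernstein--Lindsey--Hart inequality $I(f)\geq 2\mathcal{E}(f)$ when $f$ is interpreted as the indicator of $A$, so that $\mathbb{E}(f)=\mu(A)$ and the entropy factors line up as written in the proposition.
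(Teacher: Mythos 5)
Your proposal is correct and follows exactly the route the paper intends: combine the upper bound $I(f)\leq \frac{1}{2^n}(deg f)^{2/3}H^{2/3}(E(A))^{1/3}$ already established in the proof of the theorem with the Harper--Bernstein--Lindsey--Hart lower bound $I(f)\geq 2\mathcal{E}(\mu(A))$, then cube and rearrange. The arithmetic producing the constant $2^{3n+3}$ checks out, so there is nothing to add.
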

\medskip

\noindent{\bf Acknowledgement.} This work is supported by NKFIH (OTKA) grant K-129335.

\end{document}